\newcommand{\J}{{\ensuremath{{\mathrm{J}}}}}
\newcommand{\A}{{\ensuremath{{\mathrm{A}}}}}
\newcommand{\df}{{\ensuremath{\mathrm{df}}}}
\newcommand{\en}{{\ensuremath{\mathrm{en}}}}
\newcommand{\eq}{{^*}}
\newcommand{\nv}{{\ensuremath{\mathrm{nv}}}}
\newcommand{\vac}{{\ensuremath{\mathrm{v}}}}
\newcommand{\R}{{\ensuremath{\mathcal{R}}}}
\newcommand{\bydef}{:=}
\newtheorem{theorem}{Theorem}
\newtheorem{proposition}[theorem]{Proposition}
\newtheorem{remark}{Remark}
\newtheorem{definition}{Definition}
\definecolor{lightblue}{rgb}{.9, .9, 1}
\definecolor{lightred}{rgb}{1, .9, .9}
\definecolor{lightpurple}{rgb}{.9, .8, .9}
\definecolor{lightgray}{rgb}{.8, .8, .8}
\definecolor{gray_lightblue}{rgb}{.7, .7, .8}
\title{Social \textit{vs.} individual age-dependent costs of imperfect vaccination}
\begin{document}

\author[1]{Fabio A. C. C. Chalub\footnote{Corresponding author: facc@fct.unl.pt}}
\author[1]{Paulo Doutor}
\author[1]{Paula Patrício}
\author[1]{Maria do Céu Soares}

\affil[1]{Center for Mathematics and Applications (NOVA Math) and Department of Mathematics, NOVA FCT, Caparica, 2829-516, Portugal}

\maketitle

\begin{abstract}
In diseases with long-term immunity, vaccination is known to increase the average age at infection as a result of the decrease in the pathogen circulation. This implies that a vaccination campaign can have negative effects when a disease is more costly (financial or health-related costs) for higher ages. This work considers an age-structured population transmission model with imperfect vaccination. We aim to compare the social and individual costs of vaccination, assuming that disease costs are age-dependent, while the disease's dynamic is age-independent. A model coupling pathogen deterministic dynamics for a population consisting of juveniles and adults, assumed to be rational agents, is introduced. The parameter region for which vaccination has a positive social impact is fully characterized and the Nash equilibrium of the vaccination game is obtained. Finally, collective strategies designed to promote voluntary vaccination, without compromising social welfare, are discussed.
\end{abstract}

\newcommand{\sep}{ \& }
\textbf{Keywords:}
Vaccination games\sep epidemic models\sep Nash equilibria\sep age-structured models.

\section{Introduction}\label{sec1}

In this work, we examine highly contagious diseases that provide permanent immunity. Assuming that a vaccine is not available, individuals will be exposed to the pathogen in their early years and will acquire life-long immunity. This means that in a non-vaccinated population, the disease will infect mainly children~\cite{Panagiotopoulos1999}.  On the other hand, in a fully vaccinated population with a perfect vaccine, the disease will cease to exist. 

However, when vaccination coverage is moderate or if the vaccine confers imperfect protection, the average age of first exposure to the pathogen will be larger than in a non-vaccinated population. Depending on the precise characteristics of the disease, the population-wide effect may be worse than in both previous cases. For example,  varicella (chickenpox), rubella, mumps, and Zika have generally mild effects with a few severe cases in juveniles, but when the virus infects adults, particularly pregnant women, the health consequences can be severe~\cite{Riera-Montes,Daley,Abushouk}.

 We introduce a mathematical model where we consider imperfect vaccination. We first identify the vaccine coverage that avoids pernicious effects depending on vaccine and infection costs. Additionally, we analyze how human behavior affects vaccination compliance. We aim to compare population and individual interests towards vaccination and examine two particularly relevant issues: waning immunity from vaccines and human behavior.

In the model of the present work, the various costs associated with the disease are assumed to be age-dependent, while the disease's dynamic is age-independent. We divide the population into juveniles and adults and assume that both groups' transmission rates and average infectious periods are the same. The only modeling assumption that differs between both age groups is that both fertility and mortality rates are zero for juveniles, and non-zero otherwise. 

Infection-related costs may include different convalescence times, absence from work, increased need for medical care, and even emotional costs, such as the interruption of pregnancy. It may also include later manifestations of infections, such as shingles, an inflammation of the sensory ganglia caused by reactivation of the herpes zoster virus, the same that causes varicella, after a dormancy period that may span over decades~\cite{Kawai_BMC2014}.

In this work, we assume that individual costs are higher for infected adults than for juveniles. It is necessary to include a cost for vaccinated individuals, including not only financial costs, and possible adverse reactions, but also --- real or imaginary --- vaccine risks. Finally, we introduce a social cost function as the sum of individual costs for all individuals in a given population.

The immunity induced by vaccines is different from the natural immunity provided by the disease. For example, the varicella-zoster virus may stay dormant in the body even after the acute period and reactivate later in life causing herpes zoster (shingles). That reactivation may be prevented by the continuous immune response to the presence of the virus, which means that the absence of virus circulation, due to universal vaccination may accelerate immune waning and increase the incidence of herpes zoster among adults who previously had varicella. The varicella vaccine provides indirect protection against shingles not only because decreases the probability of having a primary varicella infection but also because the attenuated virus of the vaccine is less likely to reactivate~\cite{Yang_2020,Brisson_2002}. 

There is a substantial degree of variation in antibody levels in the individuals in the population and in each individual over time, cf.~\cite{Yang_2020} for empirical data in childhood diseases. Waning immunity has important dynamic effects on the spread of mumps and measles, cf.~\cite{Yang_2020}; see also~\cite{Damron_2020} for the case of pertussis. In particular, recent outbreaks of mumps seem to be linked to vaccine-induced waning immunity~\cite{Lewnard2018, Cohen2007}. 

In this work, we divide the vaccinees into two groups according to the induced protection: those permanently protected and those protected only during a certain period. In a sense, this is a simplified version of the measles model~\cite{Mossong1999}; see also \cite{Mossong2003} for a more detailed age-structured version. In these last two works, different vaccine-induced immune classes are introduced, considering permanent immunity, temporary immunity, and no immunity at all.

To highlight the difference in costs among the two age groups considered in this work, we will consider that for vaccinated individuals with temporary immunity, the typical wanning time will correspond to the juvenile-adult transition.

In any vaccination campaign the human willingness to be vaccinated cannot be disregarded. The central paradox, already identified in~\cite{Bauch04}, is that if a vaccination scheme is well-succeeded, the risk associated with the disease will decrease to the point that the population will start to question the need for the vaccine. Just before vaccine coverage reaches the herd immunity threshold, rational individuals will stop being vaccinated as the perceived risk of the vaccine will equal the perceived risk of the disease, which will be small at this point. Therefore, herd immunity will not be obtained through vaccination without further incentives to be vaccinated (and to vaccinate the dependents) or punishment of non-vaccinated individuals (e.g., the exclusion of the school system). Since the seminal work~\cite{Bauch04}, other models considered the coupling between the deterministic disease dynamics with game-theoretical models for individual decisions within the population, cf.~\cite{Doutor2016,Manfredi2013, WANG20161, Chang2020,Laguzet,Villota_2024}.

We finish the introduction with the outline of the paper. In Section~\ref{sec:model}, we introduce the model and present some basic results,  including the explicit expression for the basic reproduction number, and the characterization of equilibria and their stability. In the sequel, we discuss the model, analyzing first the social costs of vaccination and then, using techniques from game theory, the effects of considering voluntary vaccination and individual interests; in particular, we define Nash equilibrium within the context of the present work. In Section~\ref{sec:numerics}, we present numerical simulations based on typical values for childhood diseases to study socially cost-efficient parameters regions, Nash equilibria of the vaccination games, parameters region such that rational individuals accept or do not accept to be vaccinated, and how shared costs between individuals and the society can dramatically influence the endemic equilibria of the model. We conclude in Section~\ref{sec:conc} with a discussion of our main findind.

\section{Methods}\label{sec:model}

\subsection{The model}

We consider an age-structured population divided into two groups: juveniles and adults, both having the same susceptibility, infectiousness, and recovery rate, but only adults reproduce and die. Each individual is vaccinated at birth with probability $p\in[0,1]$. The vaccine is imperfect, with efficacy $\lambda \in [0,1]$, meaning that for a fraction $\lambda$ it confers life-long immunity, while for the remainder $1-\lambda$ the immunity only lasts during the juvenile phase ($1/\nu$ yrs). The parameter $\lambda$ characterizes not only different levels of immunization to different individuals but also the fraction of the population with an incomplete vaccination scheme. The adult phase lasts for $1/\mu$ yrs. The model diagram is represented in Fig.~\ref{fig:SIR}. The relevant set of values is presented at Table~\ref{table:parameters}, while model variables are defined at Table~\ref{table:variables}. 

\begin{figure}%
\centering
\includegraphics[width=0.3\textwidth]{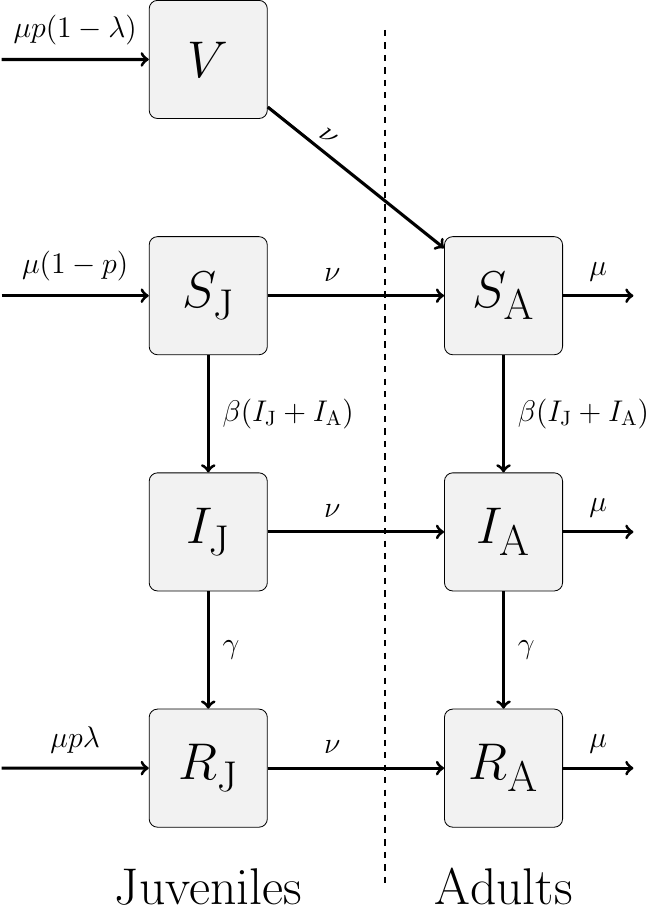}
\caption{Schematic diagram of the SIR model for juveniles and adults. Parameter $\mu$ corresponds to adult mortality and birth rates. The transition rate between both age groups is given by $\nu$. Vaccination (with coverage $p$) provides long-term immunity for a fraction $\lambda$ of the individuals and temporary (i.e., during the juvenile phase) for a fraction $1-\lambda$. Disease transmission $\beta$ and recovering $\gamma$ are assumed to be independent of the age group.}\label{fig:SIR}
\end{figure}

\begin{table}
\begin{center}
\begin{tabular}{clcc}
\textbf{Parameter} & \textbf{Description} &\textbf{Value}& \textbf{Unity} \\
\hline\hline
 $\mu>0$ &  birth/adult mortality rate &1/70&yrs$^{-1}$\\
    \hline
   $\gamma>0$ & recovering rate &365/12  &yrs$^{-1}$\\
    \hline
   $\beta>0$ & transmission rate & such that $\R_0=8$ &yrs$^{-1}$ per capita\\
    \hline
   $\nu>0$ & rate of immunity loss &1/15 &yrs$^{-1}$\\
    \hline
   $p\in [0,1]$ & vaccine coverage & &non-dimensional\\
    \hline
   $\lambda\in [0,1]$ & vaccine efficacy& &non-dimensional\\
\end{tabular}
\caption{Values used in this work. Parameters $\mu$, $\nu$, $\gamma$, and $\beta$ are not disease-specific and were chosen as an illustration in the range of varicella and rubella that served as motivation~\cite{Fine}. The value of $\beta$ was obtained from Eq.~(\ref{R0}) at demographic equilibrium.
 In Fig.~\ref{Fig4} we consider a range of values~$\R_0$.}\label{table:parameters}
\end{center}
\end{table}

\begin{table}
\begin{center}
\begin{tabular}{cp{0.8\textwidth}}
\textbf{Variable} & \textbf{Description}  \\
\hline\hline
 $V$&Fraction of individuals vaccinated at birth\\
 \hline
  $S_\J$&Fraction of susceptible juveniles\\
 \hline
 $I_\J$&Fraction of infectious juveniles,\\
 \hline
 $R_\J$&Fraction of juveniles with life-long immunity (due to recovery or vaccination)\\
 \hline
 $S_\A$&Fraction of susceptible adults\\
 \hline
 $I_\A$&Fraction of infectious adults\\
 \hline
 $R_\A$&Fraction of adults with life-long immunity (due to recovery or vaccination)\\
 \hline
 $N_\J$&Fraction of juveniles (equal to $V+S_\J+I_\J+R_\J$)\\
 \hline
 $N_\A$&Fraction of adults (equal to $S_\A+I_\A+R_\A$)
\end{tabular}
\caption{Compartment variables used in the model; c.f. Eqs.~(\ref{model:eq_V})--(\ref{model:eq_RA}). }\label{table:variables}\end{center}
\end{table}

The model is given by the following system of differential equations:

\begin{align}
\label{model:eq_V}
 V'&=\mu p(1-\lambda)N_\A -\nu V\ ,\\
\label{model:eq_SJ}
 S_\J'&=\mu(1-p)N_A-\nu S_\J-\beta(I_\J+I_\A)S_\J\ ,\\
\label{model:eq_IJ}
 I_\J'&=\beta(I_\J+I_\A)S_\J-\nu I_\J-\gamma I_\J\ ,\\
\label{model:eq_RJ}
 R_\J'&=\mu p\lambda N_\A +\gamma I_\J-\nu R_\J\ ,\\
\label{model:eq_SA}
 S_\A'&=\nu (V+S_\J)-\mu S_\A-\beta(I_\J+I_\A)S_\A\ ,\\
\label{model:eq_IA}
 I_\A'&=\beta(I_\J+I_\A)S_\A+\nu I_\J-\mu I_\A-\gamma I_\A\ ,\\
\label{model:eq_RA}
 R_\A'&=\nu R_\J+\gamma I_\A-\mu R_\A\ .
\end{align}
The total population $N=V+S_\J+I_\J+R_\J+S_\A+I_\A+R_\A$ is constant, and, therefore, we set $N(t)=1$ for all $t\ge 0$. Furthermore, we define the juvenile and adult population by $N_\J\bydef V+S_\J+I_\J+R_\J$ and $N_\A\bydef S_\A+I_\A+R_\A=1-N_\J$, respectively. Adding Eqs.~(\ref{model:eq_SA}), (\ref{model:eq_IA}) and (\ref{model:eq_RA}), we conclude that $N_\A'=\nu (1-N_\A)-\mu N_\A$. We say that a population is in \emph{demographic equilibrium} if $N_\J$ and $N_\A$ are constants. In that case
\begin{equation}\label{eq:demographic}
N_\J(t)= N_\J^*\bydef\frac{\mu}{\mu+\nu}\ ,\quad N_\A(t)=N_\A^*\bydef\frac{\nu}{\nu+\mu}\ .
\end{equation}

Both the disease-free and endemic equilibrium can be readily obtained. Their stability depends on the value of the critical parameter $\mathcal{R}_p$, obtained using the next generation matrix approach \cite{Driessche2002}.
More explicitly, we state:

\begin{theorem}\label{Equilibriastability}
 For any value of $p\in[0,1]$, there is one equilibrium solution of Eqs.~(\ref{model:eq_V})--(\ref{model:eq_RA}), called \emph{the disease-free solution}, given by
 \begin{align*}
V^{\df}&\bydef N_\J^* p(1-\lambda), &&S_\J^\df \bydef N_\J^* (1-p) ,\\
R_\J^\df&\bydef N_\J^*p\lambda, && S_\A^\df\bydef N_\A^*(1-p\lambda) , \\
R_\A^\df&\bydef N_\A^* p\lambda, && I_\J^\df\bydef I_\A^\df=0.
\end{align*}
Let the effective reproduction number be
\begin{equation}\label{Rp}
   \mathcal{R}_p\bydef\!\frac{\beta}{\gamma+\mu}\!\left[\frac{\mu+\nu+\gamma}{\nu+\gamma}S_J^\df+S_A^\df\right]
   =\frac{\beta}{\gamma+\mu}\frac{\mu(\mu+\gamma+\nu)(1-p)+\nu(\nu+\gamma)(1-\lambda p)}{(\gamma+\nu)(\mu+\nu)}\ .
\end{equation}
Then
\begin{itemize}
    \item 
If $\mathcal{R}_p\le1$ the only equilibrium solution of Eqs.~(\ref{model:eq_V})--(\ref{model:eq_RA}) is the disease-free solution, which is locally asymptotically stable if $\mathcal{R}_p<1$.

\item If $\mathcal{R}_p>1$ the disease-free solution is unstable. Furthermore, there is a second equilibrium solution of Eqs.~(\ref{model:eq_V})--(\ref{model:eq_RA}), called \emph{the endemic solution}, given by 
\begin{align*}
V^\en&\bydef N_\J^* p(1-\lambda) =\frac{\mu p(1-\lambda)}{\mu+\nu}, \\ 
S^\en_\J&\bydef \frac{ N_J^*(1-p)\nu }{\nu + \beta I^\en} =\frac{\mu\nu(1-p)}{(\mu+\nu)(\nu+\beta I^\en)} ,\\
R^\en_J&\bydef \frac{\gamma}{\nu}I_\J^\en + N_\J^* p\lambda=N_J^*\left[\frac{(1-p)\gamma\beta I^\en}{(\gamma+\nu)(\nu+\beta I^\en)}+p\nu\right],\\
S^\en_\A&\bydef \mu N_\A^* \frac{(1-p)\nu+p(1-\lambda)(\nu+\beta I^\en)}{(\mu+\beta I^\en)(\nu+\beta I^\en)}\ ,\\
R_\A^\en&\bydef \frac{\gamma}{\mu}I^\en + N_\A^* p\lambda,\\
I^\en_\J&\bydef \frac{ N^*_\J(1-p)\beta I^\en\nu }{(\nu+\gamma)(\nu + \beta I^\en)},\\
 I^\en_\A&\bydef \frac{\mu N^*_\A\beta I^\en }{\mu+\gamma}\left[\frac{p(1-\lambda)}{\mu+\beta I^\en}+\frac{(1-p)\nu}{(\mu+\beta I^\en)(\nu+\beta I^\en)}+\frac{ (1-p)\nu }{(\nu+\gamma)(\nu + \beta I^\en)}\right].
\end{align*}
Finally, the total number of infectious individuals at the endemic equilibrium is given by
\begin{equation}\label{Ien}
I^\en\bydef I^\en_\J+I^\en_\A=\frac{b_1+\sqrt{b_1^2+4b_2b_0}}{2b_2}\ ,
\end{equation}
where
\begin{align*}
b_0&\bydef \mu\nu\left[\beta (\mu(\mu+\gamma+\nu)(1-p)+\nu(\nu+\gamma)(1-\lambda p))-(\gamma+\mu)(\gamma+\nu)(\mu+\nu)\right]\\
&(b_0>0 \Leftrightarrow \mathcal{R}_p>1)\ ,\\
b_1&\bydef \beta^2 \nu\mu((\gamma+\nu)(1-\lambda p)+\mu(1-p))-\beta(\gamma + \mu)(\gamma + \nu)(\mu + \nu)^2, \\
b_2&\bydef \beta^2 (\gamma + \mu)(\gamma + \nu)(\mu+\nu)\ .
\end{align*}
\end{itemize}
\end{theorem}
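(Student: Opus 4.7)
The plan is to address the five assertions in sequence. First, to produce the disease-free solution, I would set $I_\J = I_\A = 0$ in (\ref{model:eq_V})--(\ref{model:eq_RA}) and require all derivatives to vanish. The equation $N_\A' = \nu(1-N_\A)-\mu N_\A = 0$ forces $N_\A = N_\A^*$ and hence $N_\J = N_\J^*$, and the remaining equations are linear and triangular in the non-infective variables, yielding the stated formulas for $V^\df, S_\J^\df, R_\J^\df, S_\A^\df, R_\A^\df$ uniquely. Next, to obtain $\mathcal{R}_p$, I would apply the next-generation matrix approach of~\cite{Driessche2002} to the infected compartments $(I_\J, I_\A)$: the new-infection and transition matrices evaluated at the disease-free solution are
\begin{equation*}
F = \beta\begin{pmatrix} S_\J^\df & S_\J^\df \\ S_\A^\df & S_\A^\df \end{pmatrix}, \qquad \mathcal{V} = \begin{pmatrix} \nu+\gamma & 0 \\ -\nu & \mu+\gamma \end{pmatrix}.
\end{equation*}
Since $F$ has rank one, $F\mathcal{V}^{-1}$ has only one nonzero eigenvalue, equal to its trace, and a direct computation gives exactly (\ref{Rp}). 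Local asymptotic stability of the disease-free solution when $\mathcal{R}_p<1$, and its instability when $\mathcal{R}_p>1$, then follow from the main theorem of~\cite{Driessche2002}.

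For the endemic solution, I would parametrize everything in terms of the total infected fraction $I = I_\J + I_\A$. At any equilibrium, $N_\A'=0$ still gives $N_\A=N_\A^*$, so $V^\en$ comes from $V'=0$, then $S_\J^\en$ from $S_\J'=0$, then $I_\J^\en = \beta I S_\J^\en/(\nu+\gamma)$ from $I_\J'=0$, $R_\J^\en$ from $R_\J'=0$, $S_\A^\en$ from $S_\A'=0$ using the expression for $V^\en+S_\J^\en$, $I_\A^\en = (\beta I S_\A^\en + \nu I_\J^\en)/(\mu+\gamma)$ from $I_\A'=0$, and $R_\A^\en$ from $R_\A'=0$; each substitution is routine and reproduces the explicit formulas listed. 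The only unknown left is $I$, which is fixed by the self-consistency identity $I_\J^\en + I_\A^\en = I$. After clearing the common denominator $(\nu+\gamma)(\mu+\gamma)(\nu+\beta I)(\mu+\beta I)$ and dividing out the factor $I$ (we seek $I>0$), this identity collapses to the quadratic $b_2 I^2 - b_1 I - b_0 = 0$ with the stated coefficients, and the equivalence $b_0>0 \Leftrightarrow \mathcal{R}_p>1$ is checked by direct comparison with the numerator in (\ref{Rp}).

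It remains to analyze this quadratic. Since $b_2>0$ by inspection, when $\mathcal{R}_p>1$ the product of roots $-b_0/b_2$ is negative, so there is a unique positive root, namely (\ref{Ien}). When $\mathcal{R}_p<1$ we have $b_0<0$, so a positive root would require $b_1>0$; to exclude this I would show directly that $b_1<0$ whenever $\mathcal{R}_p<1$. The main obstacle lies here, since $b_1$ and $b_0$ are not proportional. The argument I have in mind multiplies the inequality $\mathcal{R}_p<1$ through by $(\mu+\nu)$ and then uses the termwise bound
\begin{equation*}
\mu\nu[(\gamma+\nu)(1-\lambda p)+\mu(1-p)] < (\mu+\nu)[\mu(\mu+\gamma+\nu)(1-p)+\nu(\nu+\gamma)(1-\lambda p)],
\end{equation*}
which holds unconditionally because the right-hand side minus the left-hand side equals $\mu^2(\mu+\gamma)(1-p)+\mu\nu(\mu+\gamma+\nu)(1-p)+\nu^2(\nu+\gamma)(1-\lambda p)>0$. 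Combining these two bounds yields $b_1<0$, so by Vieta's relations no positive root exists when $\mathcal{R}_p<1$, and the disease-free solution is the unique equilibrium, completing the proof.
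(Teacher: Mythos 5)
Your proposal is correct and follows essentially the same route as the paper: the disease-free state by direct substitution, the next-generation matrix of van den Driessche--Watmough for $\mathcal{R}_p$ and local stability, and the endemic equilibrium parametrized by the total infected fraction $I$, reduced to the quadratic $-b_2I^2+b_1I+b_0=0$. You in fact go slightly further than the paper's proof by explicitly verifying, via the sign of $b_1$, that the quadratic admits no positive root when $\mathcal{R}_p<1$ and exactly one when $\mathcal{R}_p>1$ --- a uniqueness detail the paper leaves implicit (note only that your auxiliary inequality degenerates to equality at $p=\lambda=1$, which is harmless since the chained bound remains strict).
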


\begin{proof}
 
The disease-free solution is immediate after imposing $I^\df_\J=I^\df_\A=0$ in the stationary (i.e., $'=0$) solution of the System~(\ref{model:eq_V})--(\ref{model:eq_RA}). Following~\cite[Thm. 2]{Driessche2002}, we consider the compartments corresponding to infectious individuals to be $x=(I_\J, I_\A)$ and the remaining compartments corresponding to non-infectious classes $y=(V,S_\J,R_\J,S_\A,R_\A)$. We define the rate of appearance of new infections as $\mathcal{F}(x,y)=(\beta(I_\J+I_{\A})S_\J,\beta(I_\J+I_{\A})S_{\A}))$ and the remaining transition terms as $\mathcal{V}(x,y)=(\nu I_J+\gamma I_J,-\nu I_J+(\gamma+\mu) I_\A)$. Hence, System \eqref{model:eq_V} can be written as
\[
x'=\mathcal{F}(x,y)-\mathcal{V}(x,y),\quad y'=g(x,y)\ ,
\]
for an appropriate function $g$.
We define the matrices 
\[
F=\left[\frac{\partial \mathcal{F}_i}{\partial x_j} (x_0,y_0)\right]=\begin{bmatrix} \beta S^{\df}_\J & \beta S^{\df}_\J\\
\beta S^{\df}_\A & \beta S^{\df}_\A \end{bmatrix}
\]
and 
\[
V=\left[\frac{\partial \mathcal{V}_i}{\partial x_j} (x_0,y_0)\right]=\begin{bmatrix}
\nu+\gamma & 0\\ -\nu & \mu+\gamma
\end{bmatrix}\ ,
\]
where $(x_0,y_0)$ represents the disease free equilibrium. 
It's straightforward to verify conditions $(A_1)$ to $(A_5)$ of Theorem 2 in \cite{Driessche2002}, hence we conclude that the effective reproduction number $\mathcal{R}_p$ is given by the spectral radius of the next generation matrix 
\[
FV^{-1}=\frac{\beta}{(\gamma+\mu)(\gamma+\nu)} \begin{bmatrix}
(\gamma+\mu+\nu)S^{\df}_\J & (\gamma+\nu)S^{\df}_\J\\ (\gamma+\mu+\nu)S^{\df}_{\ensuremath{\mathcal{A}}} & (\gamma+\nu)S^{\df}_{\ensuremath{\mathcal{A}}}
\end{bmatrix}\ ,
\]
i.e.,
\[
\mathcal{R}_p\bydef\frac{\beta}{\gamma+\mu}\left[\frac{\mu+\nu+\gamma}{\nu+\gamma}S_J^{\df}+S_A^{\df}\right]\ .
\]
The stability  follows from~\cite{Driessche2002}, namely the disease-free equilibrium is locally asymptotically stable if ${\mathcal{R}_p<1}$, and unstable if $ \mathcal{R}_p  > 1$.
 For the computation of the endemic equilibrium, we follow the same techniques as before; in this case, however, the stationary solution implicitly depends on the value of $I^\en$, the solution of $\wp(I)=0$, where $\wp(I)\bydef -b_2I^2+b_1I+b_0$.

For the case $\mathcal{R}_p=1$, note that $b_1<0$ and, therefore, $I^\en=0$, i.e., both equilibria coincide.
\end{proof}

As an immediate consequence of Theorem~\ref{Equilibriastability}, we write 
\[
\mathcal{R}_p=\mathcal{R}_0\left[1-p\left(1-\frac{(1-\lambda)\nu(\nu+\gamma)}{\mu(\mu+\gamma+\nu)+\nu(\nu+\gamma)}\right)\right] ,
\]
with 
\begin{equation}\label{R0}
{\ensuremath{\mathcal{R}}}_0\bydef \left.{\ensuremath{\mathcal{R}}}_p \right|_{p=0}=\frac{\beta}{\gamma+\mu}\left[\left(1+\frac{\mu}{\nu+\gamma}\right) N_J^*+N_A^*\right]\ .
\end{equation}

Furthermore,
\begin{theorem}\label{Equilibriastability2}
Let $\Gamma= \{(V, S_\J, I_\J, R_\J, S_\A, I_\A, R_\A): S_\J\leq S_\J^\df,  S_\A\leq S_\A^\df,\linebreak V\leq V^\df, N_\A\leq N_\A^*\}$, and consider  the  model given by Eqs.~(\ref{model:eq_V})--(\ref{model:eq_RA}). Then
 \begin{itemize}
    \item 
If $\mathcal{R}_p \le1$ the only equilibrium solution of the System (\ref{model:eq_V})--(\ref{model:eq_RA}) is the disease-free solution, which is globally asymptotically stable in $\Gamma$.

\item If $\mathcal{R}_p>1$ the disease-free solution is unstable. The System (\ref{model:eq_V})--(\ref{model:eq_RA}) is uniformly persistent. 
\end{itemize}
\end{theorem}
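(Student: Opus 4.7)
The plan is to split the statement into the two cases $\mathcal{R}_p<1$ and $\mathcal{R}_p>1$, noting that the local instability of the disease-free equilibrium in the second case already follows from Theorem~\ref{Equilibriastability}, so the genuinely new content is (a) global attractivity of the disease-free solution in $\Gamma$ when $\mathcal{R}_p<1$ and (b) uniform persistence of the infectious compartments when $\mathcal{R}_p>1$.

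For (a), I would first check that $\Gamma$ is positively invariant. The bound $N_\A\le N_\A^*$ is preserved because $N_\A$ satisfies the scalar linear ODE $N_\A'=\nu(1-N_\A)-\mu N_\A$ with $N_\A^*$ as attracting fixed point; given $N_\A(t)\le N_\A^*$, the right-hand side of Eq.~(\ref{model:eq_V}) evaluated at $V=V^\df$ is nonpositive, so $V\le V^\df$ is preserved, and analogous comparisons using the nonnegativity of the force-of-infection terms yield the invariance of $S_\J\le S_\J^\df$ and $S_\A\le S_\A^\df$. Then, following the construction of Shuai \& van den Driessche for next-generation models, let $(\omega_1,\omega_2)$ be a nonnegative left eigenvector of $V^{-1}F$ (with $F,V$ as in the proof of Theorem~\ref{Equilibriastability}) associated with $\mathcal{R}_p$, and define the Lyapunov function
\[
L\bydef \omega_1 I_\J+\omega_2 I_\A .
\]
Differentiating along trajectories and using $S_\J\le S_\J^\df$, $S_\A\le S_\A^\df$ in $\Gamma$ bounds the nonlinear terms above by those of the linearization at the disease-free equilibrium, giving $L'\le(\mathcal{R}_p-1)\,(\omega_1 I_\J+\omega_2 I_\A)\le 0$, with equality only when $I_\J=I_\A=0$. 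LaSalle's invariance principle then forces $I_\J,I_\A\to 0$; on the invariant set $\{I_\J=I_\A=0\}$ the remaining equations reduce to a linear cascade with the disease-free values as globally attracting fixed point.

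For (b), I would invoke the standard machinery of uniform persistence (for example Thieme's theorem, or the formulation in Smith \& Thieme). Dissipativity is immediate because all state variables lie in $[0,1]$. The maximal invariant set on the boundary $M_0=\{I_\J=I_\A=0\}$ consists only of the disease-free equilibrium, to which every orbit in $M_0$ converges (direct integration of the linear subsystem for $(V,S_\J,R_\J,S_\A,R_\A)$). Hence the boundary flow has $\{x^\df\}$ as an acyclic isolated covering. The local instability from Theorem~\ref{Equilibriastability} when $\mathcal{R}_p>1$ implies the stable manifold of $x^\df$ is contained in $M_0$, so $x^\df$ is a uniform weak repeller for the set $\{I_\J+I_\A>0\}$. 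The hypotheses of the persistence theorem are therefore met, and one obtains $\epsilon>0$ with $\liminf_{t\to\infty}(I_\J(t)+I_\A(t))\ge\epsilon$ for every solution starting with positive infectious content.

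The main obstacle I anticipate lies in part (a): the off-diagonal entry $-\nu$ in $V$, coming from the progression of infectious juveniles to infectious adults, means that the naive choice $L=I_\J+I_\A$ leaves an uncontrolled positive term, so the Lyapunov weights $(\omega_1,\omega_2)$ must strictly favor the adult class. Pinning down these weights and verifying that the bounds defining $\Gamma$ are exactly what is needed to convert the resulting expression into a manifestly nonpositive one, without spurious cross terms from $\nu V$ or $\nu S_\J$, is the delicate computational step; the invariance verification for $\Gamma$ is the other place where a careless sign can spoil the argument.
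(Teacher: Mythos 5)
Your argument follows essentially the same route as the paper: positive invariance of $\Gamma$, the comparison $f=(F-V)x-\mathcal{F}+\mathcal{V}\ge 0$ together with $F\ge 0$, $V^{-1}\ge 0$, irreducibility of $V^{-1}F$, and global asymptotic stability of the disease-free $y$-subsystem are exactly the hypotheses of Shuai and van den Driessche's Theorem~2.2, which the paper cites once to obtain both global stability in $\Gamma$ for $\mathcal{R}_p<1$ and uniform persistence for $\mathcal{R}_p>1$. The only differences are cosmetic: you unpack the Lyapunov function explicitly (in that framework it is $\omega^{T}V^{-1}x$, a positively weighted sum of $I_\J$ and $I_\A$ whose heavier weight falls on the juvenile class, not the adult one as you guessed) and you invoke a separate acyclicity-based persistence theorem for the second bullet rather than reading it off the same citation.
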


\begin{proof}
    The set $\Gamma$ is positively invariant. Following the notation from the proof of Thm.~\ref{Equilibriastability} we define $f(x,y)=(F-V)x-\mathcal{F}+\mathcal{V}$. We have that $f(x,y)\geq 0$  with $f(x,y_0)=0$ in $\Gamma$, $F\geq0$, $V^{-1}\geq 0$ and $V^{-1}F$ is irreducible. Moreover, $(0,y)=(0,N^*_\J,0,N^*_\A,0)$ is a globally asymptotically stable (GAS) equilibrium of the system $y'=g(0,y)$. Assuming $\mathcal{R}_p\neq 1$, we conclude that the disease-free solution is GAS in $\Gamma$ for $\mathcal{R}_p<1$ and that, for $\mathcal{R}_p>1$,  the system is uniformly persistent. For $\mathcal{R}_p=1$, there is a set in which the Lyapunov function has zero derivative. In this set, the global asymptotically stability of the disease-free equilibrium is proved through a direct analysis of the model equations, while in the remaining domain, the proof follows from the previous argument. For further details, see~\cite{Shuaietal2013}, in particular Thm. 2.2 and the comments immediately below.
\end{proof}

\begin{remark}
Persistence in a dynamical system implies that its solution will not arbitrarily approach zero. Uniform persistence is a stronger condition, in which the modulus of the solution is always larger than a strictly positive number, cf.~\cite{Thieme}. Models based on ordinary differential equations, as is the case of the System (\ref{model:eq_V})--(\ref{model:eq_RA}), ignore stochastic effects, i.e., they implicitly assume infinite populations. However, populations are always finite, and in non-uniformly persistent models, the solution will arbitrarily approach the disease-free boundary (i.e., $I=0$), which is absorbing, and therefore, the disease will be eventually extinct due to stochastic fluctuations. On the other hand, if solutions are uniformly persistent, then there is a minimum population size such that disease will not be extinct due to natural fluctuations. See also~\cite{ChalubSouza} for an intermediate model, based on drift-diffusion partial differential equations of degenerated type, that models both the deterministic evolution and the stochastic evolution. Degenerated partial differential equations models pose serious mathematical difficulties and will not be analyzed in the current work. 
\end{remark}

Finally, it is straightforward to prove that 
\begin{proposition}\label{corollary:pc}
Consider
\begin{equation}
\lambda>\lambda_c\bydef 1-\frac{(\gamma+\mu)(\mu+\nu)}{\beta\nu} 
 \end{equation}
and $\mathcal{R}_0>1$. Then, there is a critical vaccination coverage 
 \begin{equation}
 p_{\mathrm{c}}\bydef\frac{\mu(\mu+\gamma+\nu)+\nu(\nu+\gamma)}{\mu(\mu+\gamma+\nu)+\lambda\nu(\gamma+\nu)}\left(1-\frac{1}{\mathcal{R}_0}\right)\in (0,1)
 \end{equation}
such that for any $p> p_{\mathrm{c}}$ the disease free solution is globally asymptotically stable in $\Gamma$. 
\end{proposition}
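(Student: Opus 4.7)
The plan is to read Proposition~\ref{corollary:pc} as a direct corollary of Theorem~\ref{Equilibriastability2}: it suffices to exhibit a threshold $p_c$ such that $p>p_c$ is equivalent to $\mathcal{R}_p<1$, and then to check that $p_c$ lies in $(0,1)$ exactly under the stated hypotheses $\mathcal{R}_0>1$ and $\lambda>\lambda_c$.

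First, I would start from the rewritten form of the effective reproduction number already derived after Theorem~\ref{Equilibriastability},
\[
\mathcal{R}_p=\mathcal{R}_0\left[1-p\left(1-\frac{(1-\lambda)\nu(\nu+\gamma)}{\mu(\mu+\gamma+\nu)+\nu(\nu+\gamma)}\right)\right],
\]
which is affine and strictly decreasing in $p$ (the bracket multiplying $p$ is positive since $(1-\lambda)\nu(\nu+\gamma)<\mu(\mu+\gamma+\nu)+\nu(\nu+\gamma)$). Setting $\mathcal{R}_p=1$ and solving for $p$ gives exactly the expression claimed for $p_c$, and monotonicity shows that $\mathcal{R}_p<1$ if and only if $p>p_c$.

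Next, I would verify the two endpoint conditions separately. The inequality $p_c>0$ is immediate, since the prefactor in $p_c$ is manifestly positive and the hypothesis $\mathcal{R}_0>1$ makes $1-1/\mathcal{R}_0>0$. For $p_c<1$, I would note that $p_c<1$ is equivalent to $\mathcal{R}_1<1$, because $\mathcal{R}_p$ is affine decreasing in $p$ with $\mathcal{R}_0>1$. Evaluating at $p=1$, using $\mathcal{R}_0=\beta\bigl[\mu(\mu+\gamma+\nu)+\nu(\nu+\gamma)\bigr]/\bigl[(\gamma+\mu)(\gamma+\nu)(\mu+\nu)\bigr]$, the factor $\mu(\mu+\gamma+\nu)+\nu(\nu+\gamma)$ cancels and one obtains
\[
\mathcal{R}_1=\frac{\beta(1-\lambda)\nu}{(\gamma+\mu)(\mu+\nu)},
\]
so $\mathcal{R}_1<1$ is literally the inequality $\lambda>1-(\gamma+\mu)(\mu+\nu)/(\beta\nu)=\lambda_c$, exactly the stated hypothesis.

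Finally, I would invoke Theorem~\ref{Equilibriastability2}: for any $p>p_c$ the inequality $\mathcal{R}_p<1$ holds, hence the disease-free equilibrium is globally asymptotically stable in $\Gamma$, completing the proof. There is no real obstacle here—the only mildly delicate point is the algebraic simplification showing $\mathcal{R}_1$ reduces to $\beta(1-\lambda)\nu/[(\gamma+\mu)(\mu+\nu)]$, which is what makes the threshold $\lambda_c$ correspond precisely to $p_c=1$.
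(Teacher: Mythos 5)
Your proof is correct and is exactly the argument the paper intends: the paper omits the proof of Proposition~\ref{corollary:pc} as ``straightforward,'' and your derivation---solving the affine equation $\mathcal{R}_p=1$ for $p$ to get $p_c$, checking $p_c>0$ from $\mathcal{R}_0>1$, identifying $p_c<1$ with $\mathcal{R}_1=\beta(1-\lambda)\nu/[(\gamma+\mu)(\mu+\nu)]<1$, i.e.\ $\lambda>\lambda_c$, and then invoking Theorem~\ref{Equilibriastability2} for global asymptotic stability in $\Gamma$---fills in precisely the omitted details. The algebra checks out.
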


\subsection{Social cost}
\label{ssec:social}

At the endemic equilibrium, we  define a social cost function (per unit of time) depending on the disease incidence and disease cost for both juveniles and adults and on the vaccination costs:

\begin{align}
\label{eq:phi}
\phi(p,\lambda)
&\bydef c_\A^{\mathrm{d}}(\beta(I^\en_\J+I^\en_\A)S^\en_\A+\nu I^\en_\J)+ c_\J^{\mathrm{d}}\beta(I^\en_\J+I^\en_\A)S^\en_\J+ c^\vac\delta\mu p N_\A^*\\
\nonumber
&=c_\A^{\mathrm{d}} (\gamma+\mu)I_\A^{\en}+c_\J^{\mathrm{d}} (\gamma+\nu) I_\J^{\en}+c^\vac\delta\mu p N_\A^*\\
\nonumber
 &=c_\A^{\mathrm{d}} [ (\gamma+\mu)I_\A^{\en}+\varepsilon (\gamma+\nu) I_\J^{\en}+r \delta\mu p N_\A^*] ,
\end{align}
where  $c_\A^{\mathrm{d}}>0$ and $c_\J^{\mathrm{d}}>0$ are the disease costs for adults and juveniles, respectively, and $c^\vac > 0$ is the vaccination cost. We define the relative costs $\varepsilon=c_\J^{\mathrm{d}}/c_\A^{\mathrm{d}}$ and $r=c^\vac/c_\A^{\mathrm{d}}$. Upon normalization, we will assume from now on that $c_\A^{\mathrm{d}}=1$. The fraction of the vaccination cost supported by the society is given by $\delta\in[0,1]$, where $\delta=1$ means that all cost is supported by the society (normally, the State), and $\delta=0$ means that the vaccinated individual pays the entire cost of the vaccination.

Note that $I^\en_{\A,\J}$ depend explicitly on $p$ and $\lambda$, cf.~Thm.~\ref{Equilibriastability}.

\begin{remark}\label{social_cost_remark}
Social cost $\phi(p,\lambda)$ was obtained at the endemic equilibrium and therefore it assumes equality between entry and exit rates in all compartments. While in the first line of Eq.~(\ref{eq:phi}), social costs are obtained from the entry rates, in the last line they follow from the exit rates; if one is interested in the total cost should multiply each term of the last expression by the average time of a given individual in each class, i.e., $1/(\gamma+\mu)$, $1/(\gamma+\nu)$, and $1/\mu$, representing exits by recovery or death, by recovery or aging, or by death, respectively.
\end{remark}

\begin{table}
\begin{center}
\begin{tabular}{cl}
\textbf{Parameter} &\textbf{Description}  \\
\hline\hline
$c_\A^{\mathrm{d}}$  & disease cost of an adult\\
   \hline
    $c_\J^{\mathrm{d}}$ & disease cost of a juvenile\\
     \hline
     $c^\vac $ &  vaccination cost\\
      \hline
     $\delta$ & Fraction of the vaccination costs supported by the society\\
      \hline
  $\varepsilon\bydef c_\J^{\mathrm{d}}/c_\A^{\mathrm{d}}$ & relative disease cost of juveniles vs. adults   \\
   \hline
   $r\bydef c^\vac/c_\A^{\mathrm{d}}$ &  relative vaccination cost vs. adults disease cost \\
\end{tabular}
\caption{Cost variables used in the model. Upon normalization $c_\A^{\mathrm{d}}=1$, results presented in this article will depend only on $\delta$, a modeling parameter, $\varepsilon$ and $r$. The values for the relative costs $\varepsilon$ and $r$  used in this work are arbitrary and used for illustration purposes.}\label{tab:costparam}%
\end{center}
\end{table}

We define the acceptable social-cost region as
\[
\mathcal{V} =\left\{ (p,\lambda) \in [0,1]\times [0,1]: \Phi_{\varepsilon,r,\delta}(p,\lambda)\bydef\phi(p,\lambda)-\phi_0\le 0\right\}\ ,
\]
where $\phi_0=\phi(0,0)$ is the social cost of the disease in an unvaccinated population. 

 We define two critical values: $\lambda_{\sup}$, below which social-cost acceptance depends on vaccine coverage $p$; and $\lambda_{\inf}$, below which social-cost is unacceptable for any vaccine coverage $p$. 
 \[
 \lambda_{\sup}=\sup_{\Phi(p,\lambda) >0} \lambda, \quad \lambda_{\inf}=\inf_{\Phi(p,\lambda) <0} \lambda\ .
 \]

 \begin{figure}
\centering
\begin{subfigure}{.5\textwidth}
  \centering
  \includegraphics[width=\linewidth]{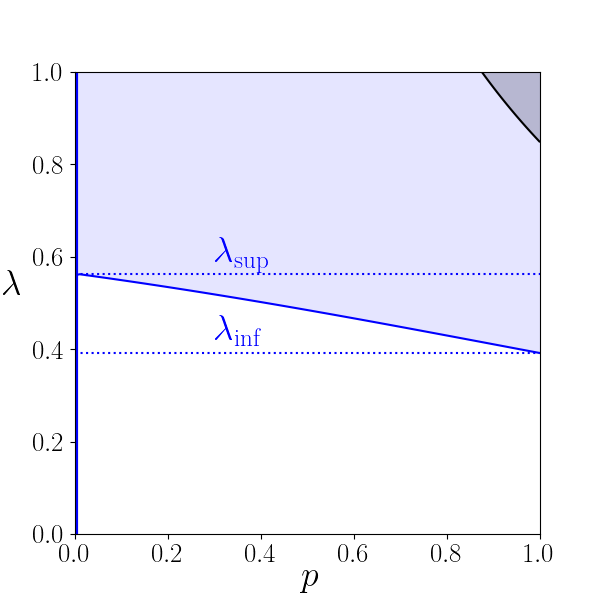}
\end{subfigure}%
\begin{subfigure}{.5\textwidth}
  \centering
  \includegraphics[width=\linewidth]{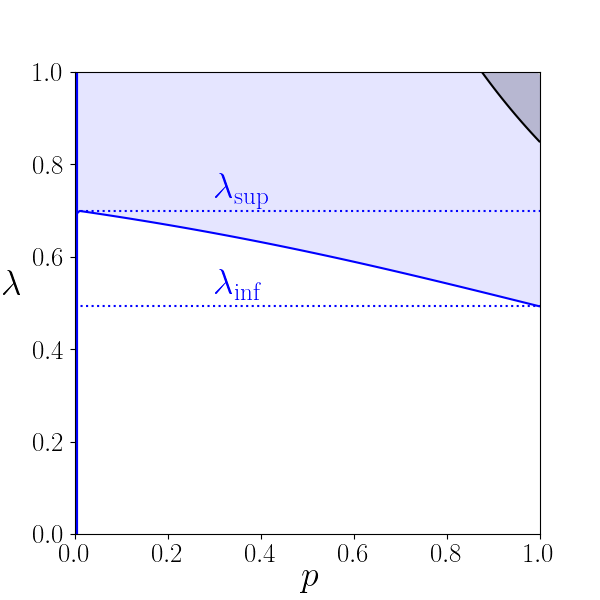}
\end{subfigure}
\caption{The  \colorbox{lightblue}{light-blue region} indicates the acceptable cost region $\Phi<0$, while the \colorbox{gray_lightblue}{grey region} is the disease-free region $\R_p<1$. We assume a juvenile/adult relative cost  $\varepsilon=0.15$, and a vaccine/disease cost $r=0.1$. Left: all vaccination costs are supported by the vaccinated individual, i.e., $\delta=0$. Right: the vaccination costs are supported by society with $\delta=1$.
}\label{Fig2}

\end{figure}
 
 Fig.~\ref{Fig2} illustrates the acceptable social-cost region in the parameter space $(p,\lambda)$  when $\delta=0$ and $\delta=1$. We observe that when $\delta$ increases the region gets smaller, due to the increase in the social cost, associated with the vaccination. Note that there is a subregion in which it is possible to eliminate the disease, i.e.,  ${\ensuremath{\mathcal{R}}}_p<1$.

\subsection{Individual cost and Nash equilibria}

Following \cite{Bauch04}, we assume that individuals freely choose to be vaccinated according to the perceived relative costs of the disease and of the vaccination.
For each $(p,\lambda)$, let us define $\Pi^{\nv}_{\A}$ and $\Pi^{\nv}_{\J}$  as the stationary (i.e., at equilibrium) probabilities of getting the disease as an adult and as a juvenile for unvaccinated individuals; and $\Pi^{\vac}_{\A}$ to be the stationary probability of getting the disease as an adult if vaccinated at birth. These values are equal to zero at the disease-free equilibrium and non-zero at the endemic equilibrium. Furthermore, they are continuous functions from the model parameters, cf.~\cite[Sec.~3.4]{Viana}.

We obtain explicit expressions for each of these three parameters. In the formulas below $*=\df$ or $*=\en$ according to $\mathcal{R}_p\le 1$ or $\mathcal{R}_p>1$, respectively, i.e., $*$ indicates the only stable equilibrium of the dynamics.

For $\Pi_\J^{\nv}$, we consider a given individual in the class $S_\J$, from which there are two possible exits. Either that given individual contracts the disease (and moves to the class $I_\J$) or he or she turns into an adult without being infected and moves to the class $S_\A$.  Explicitly,
\[
\Pi_\J^{\nv}(p,\lambda)\bydef\frac{\beta I^{\eq} S^{\eq}_\J}{(\beta I^{\eq}+\nu) S^{\eq}_\J}=\frac{\beta I^{\eq} }{\beta I^{\eq}+\nu}\ .
\]
The probability that a non-vaccinated adult gets the disease is given by the probability that a previously non-vaccinated juvenile does not get the disease as a juvenile times the probability that he or she gets the disease as an adult. Therefore
\[
\Pi_\A^{\nv}\bydef\left(1-\Pi_\J^{\nv}\right)\frac{\beta I^{\eq} S^{\eq}_\A}{(\beta I^{\eq}+\mu)S^{\eq}_\A}=\frac{\nu }{\beta I^{\eq}+\nu}\frac{\beta I^{\eq} }{\beta I^{\eq}+\mu}\ ,
\]
with $I^*\bydef I^*_\J+I^*_\A$.
Finally, the probability that a vaccinated adult gets the disease is the probability that the vaccine is effective only during the juvenile phase, $1-\lambda$, times the probability to get the disease from the class $S_\A$, i.e.,
\[
\Pi_\A^{\vac}\bydef(1-\lambda)\frac{\beta I^{\eq} }{\beta I^{\eq}+\mu}\ .
\]

We define the \emph{individual cost function} at endemic equilibrium, which corresponds to the expected cost of the individual strategy of being vaccinating with probability $q$ in a population with coverage $p$:
	\begin{align*}
			\Psi_{\varepsilon, r, \delta}(q,p,\lambda)&\bydef
			(1-q)(\Pi^{\nv}_{\A}+\varepsilon\Pi^{\nv}_\J)+q (\Pi^\vac_\A +r(1-\delta))\\
        &=\Pi^{\nv}_{\A}+\varepsilon \Pi^{\nv}_\J+q \left[ r(1-\delta)-\pi(p,\lambda)\right]\ ,
		\end{align*}
where the \emph{vaccination-infection risk index}, introduced in~\cite{Martins2017}, is given by
  \[
  \pi(p,\lambda)\bydef\Pi^{\nv}_{\A}(p,\lambda)+\varepsilon \Pi^{\nv}_\J(p,\lambda)-\Pi^\vac_\A(p,\lambda)\ .
  \]

The individual \emph{vaccination marginal expected payoff gain} $E(q,p)$ of an individual that uses the strategy of vaccinating with probability $q$ in a population that vaccinates with probability $p$ is given by 
\[
E(q,p)\bydef E(q,p; \varepsilon, r, \delta, \lambda)\bydef\Psi_{\varepsilon, r, \delta}(0,p,\lambda)-\Psi_{\varepsilon, r, \delta}(q,p,\lambda)\ .
\]
\begin{definition}\label{def:NashEquilibrium}
The population vaccination strategy $p_*$ is a vaccination \emph{Nash equilibrium}, if
\begin{align*}
E(q, p_*) - E(p_*, p_*) = (p_*-q)\left[r(1-\delta)-\pi(p_*,\lambda)\right] \leq 0 ,
\end{align*}
for every strategy $q \in [0,1]$. 
\end{definition}

In simple words, we say that the system is at Nash equilibrium if the vaccination coverage~$p_*$ is such that for every individual that uses a strategy~$q$ the expected payoff is not larger than the one it would have if the strategy~$p_*$ were used. 

\begin{proposition}\label{Proposition: Nash equilibria}
The model given by Eqs.~(\ref{model:eq_SJ})--(\ref{model:eq_RA}) has at least one Nash equilibrium. 
\end{proposition}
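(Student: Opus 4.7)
The plan is to recognize that Definition~\ref{def:NashEquilibrium} reduces to a one-variable fixed-sign problem, and then to handle it by a straightforward case analysis plus the intermediate value theorem. Write $g(p)\bydef r(1-\delta)-\pi(p,\lambda)$, so that the Nash inequality in Definition~\ref{def:NashEquilibrium} becomes
\[
(p_*-q)\,g(p_*)\le 0 \quad \text{for all } q\in[0,1].
\]
Reading off this inequality for $q=0$ and $q=1$ immediately yields a clean characterization: $p_*=0$ is a Nash equilibrium iff $g(0)\ge 0$; $p_*=1$ is a Nash equilibrium iff $g(1)\le 0$; and any interior $p_*\in(0,1)$ is a Nash equilibrium iff $g(p_*)=0$.

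The first step I would make explicit is the continuity of the map $p\mapsto \pi(p,\lambda)$ on $[0,1]$. This follows from the continuity of the three stationary probabilities $\Pi_\J^{\nv}$, $\Pi_\A^{\nv}$, $\Pi_\A^{\vac}$, which are rational functions of $I^{\eq}$; in turn, $I^{\eq}$ is continuous in $p$ as a function of the model parameters (we take $I^{\eq}=0$ on the disease-free side $\R_p\le 1$ and use formula~(\ref{Ien}) on the endemic side, and verify that both expressions agree at $\R_p=1$, as noted in~\cite{Viana}). Consequently $g$ is continuous on $[0,1]$.

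The second step is the case analysis. If $g(0)\ge 0$, then $p_*=0$ satisfies $(0-q)g(0)=-q\,g(0)\le 0$ for every $q\in[0,1]$, so it is a Nash equilibrium. Otherwise $g(0)<0$. If in addition $g(1)\le 0$, then $p_*=1$ satisfies $(1-q)g(1)\le 0$ for every $q\in[0,1]$, and it is a Nash equilibrium. In the remaining case we have $g(0)<0<g(1)$, and continuity of $g$ together with the intermediate value theorem produces some $p_*\in(0,1)$ with $g(p_*)=0$; for such a $p_*$ the inequality $(p_*-q)\,g(p_*)=0\le 0$ holds trivially for every $q$, so it too is a Nash equilibrium. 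These three cases are exhaustive, so at least one Nash equilibrium always exists.

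The main obstacle, and really the only non-formal point, is making the continuity of $\pi(\cdot,\lambda)$ on the full interval $[0,1]$ rigorous across the threshold $p=p_{\mathrm{c}}$ of Proposition~\ref{corollary:pc}, where the selected equilibrium switches from endemic to disease-free. Everything else is bookkeeping with the explicit formula for $E(q,p)-E(p,p)$. Once continuity is granted, the three cases above dispose of the statement without any further computation.
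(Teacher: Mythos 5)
Your proof is correct and follows essentially the same route as the paper's: the same three-way case analysis on the sign of $r(1-\delta)-\pi(\cdot,\lambda)$ at the endpoints, with the intermediate value theorem supplying an interior root when neither pure strategy works. The only difference is that you make explicit the continuity of $p\mapsto\pi(p,\lambda)$ across the endemic/disease-free threshold, which the paper leaves implicit (relying on its earlier remark that the stationary infection probabilities are continuous in the model parameters, with reference to~\cite{Viana}).
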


\begin{proof}
If $\pi(0,\lambda)\leq r(1-\delta)$, then $p_*=0$ is a Nash equilibrium. If $\pi(1,\lambda)\geq r(1-\delta)$, then $p_*=1$ is a Nash equilibrium. If both inequalities are false there is at least one value of $p_*\in(0,1)$ such that $\pi(p_*, \lambda)=r(1-\delta)$ and $p_*$ is a Nash equilibrium.
\end{proof}

For high vaccine efficacy $\lambda > \lambda_{\mathrm{c}}$ and $\delta\in[0,1)$, the vaccination coverage that results from individuals' choices is below the elimination threshold~$p_c$, defined in Prop.~\ref{corollary:pc}.

\begin{proposition}\label{Prop:Bauch04_generalization}
    Let $\varepsilon,r>0$, $\delta \in [0,1)$, $\lambda\in [\lambda_{\mathrm{c}}, 1]$, where $\lambda_{\mathrm{c}}$ is given by Prop.~\ref{corollary:pc}. Let  $p_{\mathrm{c}}^\lambda$ given by Prop.~\ref{corollary:pc} and $p_*^\lambda$ a Nash equilibrium of the associated model. Then, 
 $p_*^\lambda<p_{\mathrm{c}}^\lambda$.    
\end{proposition}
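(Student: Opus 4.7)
The plan is a short proof by contradiction using the fact that above the elimination threshold the disease circulation vanishes, so the perceived risk of infection is zero while the net private cost of vaccination $r(1-\delta)$ remains strictly positive; thus no rational individual has an incentive to vaccinate at or above $p_{\mathrm{c}}^\lambda$, which contradicts that coverage being sustained at a Nash equilibrium. This is essentially the age-structured, imperfect-vaccine generalization of the argument in \cite{Bauch04}.

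\textbf{Key steps.} First, I would record that $r(1-\delta)>0$, using the hypotheses $r>0$ and $\delta\in[0,1)$. Second, I would argue that $\pi(p,\lambda)=0$ for every $p\ge p_{\mathrm{c}}^\lambda$. Indeed, by Proposition~\ref{corollary:pc} and Theorem~\ref{Equilibriastability2}, $\mathcal{R}_p\le 1$ for $p\ge p_{\mathrm{c}}^\lambda$, so the disease-free solution is globally asymptotically stable and $I_\J^*=I_\A^*=0$ at the (only) equilibrium. Plugging this into the explicit formulas for $\Pi_\J^{\nv}$, $\Pi_\A^{\nv}$, and $\Pi_\A^{\vac}$ gives $\Pi_\J^{\nv}=\Pi_\A^{\nv}=\Pi_\A^{\vac}=0$, hence $\pi(p,\lambda)=0$. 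Third, I would suppose for contradiction that $p_*^\lambda\ge p_{\mathrm{c}}^\lambda$. Then $\pi(p_*^\lambda,\lambda)=0$ and the Nash inequality from Definition~\ref{def:NashEquilibrium} evaluated at the admissible test strategy $q=0$ reads
\[
(p_*^\lambda - 0)\bigl[r(1-\delta)-\pi(p_*^\lambda,\lambda)\bigr]=p_*^\lambda\, r(1-\delta)\le 0 .
\]
Since $r(1-\delta)>0$, this forces $p_*^\lambda\le 0$, hence $p_*^\lambda=0$, in contradiction with $p_*^\lambda\ge p_{\mathrm{c}}^\lambda>0$ guaranteed by Proposition~\ref{corollary:pc}. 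Therefore $p_*^\lambda<p_{\mathrm{c}}^\lambda$.

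\textbf{Anticipated obstacle.} The only delicate point is the value of $\pi$ exactly at $p=p_{\mathrm{c}}^\lambda$, where $\mathcal{R}_p=1$ and the endemic and disease-free equilibria collide; one must make sure that the stationary probabilities defining $\pi$ are read off from the unique stable equilibrium there (the disease-free one), or equivalently invoke the continuity of $\pi$ in the model parameters already noted in the paper, using that $I^{\en}\to 0$ as $p\to p_{\mathrm{c}}^{\lambda-}$ (which follows from formula~(\ref{Ien}) together with $b_0\to 0^+$). Once this is in place the argument is essentially one line, and the edge case $\lambda=\lambda_{\mathrm{c}}$ (where $p_{\mathrm{c}}^\lambda=1$) is handled by the same contradiction, since the argument only uses $p_{\mathrm{c}}^\lambda>0$.
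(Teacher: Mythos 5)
Your proposal is correct and follows essentially the same route as the paper's own proof: establish $\pi(p,\lambda)=0$ for $p\ge p_{\mathrm{c}}^\lambda$ (using continuity of $\pi$ at the threshold itself), then derive a contradiction from the Nash inequality of Definition~\ref{def:NashEquilibrium} since $r(1-\delta)>0$ forces $p_*^\lambda\le q$ for all admissible $q$, in particular $q=0$. Your explicit treatment of the boundary case $p=p_{\mathrm{c}}^\lambda$ and of $r(1-\delta)>0$ is a slightly more careful write-up of the same argument.
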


\begin{proof}
    From Prop.~\ref{corollary:pc}, for any value $p>p_{\mathrm{c}}^\lambda$ it is true that $\pi(p,\lambda)=0$. From the continuity of~$\pi$, we conclude that $\pi(p_c^\lambda,\lambda)=0$.     
    Assume that $p_*^\lambda\ge p_{\mathrm c}^\lambda>0$. From Def.~\ref{def:NashEquilibrium} we have that $(p_*^\lambda-q)\left[r(1-\delta)-\pi(p_*^\lambda,\lambda)\right] \leq 0 $ for every $q \in [0,1]$, therefore $p_*^\lambda\leq q$, for every $q \in [0,1]$, which is a contradiction. 
\end{proof}

This result generalizes the idea, already present in~\cite{Bauch04}, that it is impossible to eradicate a disease through voluntary vaccination.

Inspired by the concept of evolutionary stable strategy in game dynamics, cf.~\cite{HofbauerSigmund}, we define:

\begin{definition}
The population vaccination strategy $p_*$ is an \emph{evolutionary stable vaccination (ESV)} strategy, if there is a $\tau_0 > 0$,
such that for every $\tau \in (0, \tau_0)$ and for every $q \in [0, 1]$, with $q \neq p_*$,
\[
E(q, (1-\tau)p_*+\tau q)- E(p_*, (1-\tau)p_*+\tau q) < 0 .
\]

\end{definition}

We are ready to state the conditions for the Nash equilibrium to be ESV.

\begin{proposition}\label{Proposition: Nash ESV}
Let $p_*$ be a Nash equilibrium of the vaccination game. If $p_*=0$ or $p_*=1$, then $p_*$ is an ESV. Furthermore, if $\pi(p_*,\lambda)=r(1-\delta)$, $p_*$ is an ESV if and only if $\pi(p,\lambda)$ is decreasing in $p$ at $p=p_*$. In particular, $p_*\in(0,1)$ is an ESV if and only if $\pi(p,\lambda)$ is decreasing in $p$ at $p=p_*$.
\end{proposition}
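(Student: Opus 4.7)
The main idea is to reduce the ESV condition to a simple sign condition on the vaccination-infection risk index $\pi$ along the perturbed population strategy. Using the explicit formula for $\Psi_{\varepsilon,r,\delta}$, one immediately gets
\begin{equation*}
E(q,p) = \Psi_{\varepsilon,r,\delta}(0,p,\lambda)-\Psi_{\varepsilon,r,\delta}(q,p,\lambda) = q\bigl[\pi(p,\lambda) - r(1-\delta)\bigr],
\end{equation*}
so that, for any perturbed population state $p_\tau \bydef (1-\tau)p_*+\tau q$,
\begin{equation*}
E(q,p_\tau) - E(p_*,p_\tau) = (q-p_*)\bigl[\pi(p_\tau,\lambda) - r(1-\delta)\bigr].
\end{equation*}
The ESV property is then the statement that this last expression is strictly negative for every $q\in[0,1]\setminus\{p_*\}$ and every sufficiently small $\tau>0$.

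Next I would split into the three cases suggested by the statement. For $p_*=0$, one has $q-p_*=q>0$ for any admissible $q\in(0,1]$, and by Proposition~\ref{Proposition: Nash equilibria}, $\pi(0,\lambda)\le r(1-\delta)$; continuity of $\pi$ then implies $\pi(p_\tau,\lambda)-r(1-\delta)<0$ for small $\tau$ (in the generic strict case, or under the tacit assumption that $\pi$ is strictly decreasing near $0$ when the Nash inequality is tight). The case $p_*=1$ is symmetric: now $q-p_*<0$ and $\pi(1,\lambda)\ge r(1-\delta)$ forces the bracket to be positive for small $\tau$. These two boundary cases are immediate from continuity of $\pi$.

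For the main (interior-like) case, assume $\pi(p_*,\lambda)=r(1-\delta)$. The key observation is that $p_\tau-p_*=\tau(q-p_*)$, so $p_\tau-p_*$ and $q-p_*$ have the same sign. If $\pi(\cdot,\lambda)$ is strictly decreasing at $p=p_*$, then $\pi(p_\tau,\lambda)-\pi(p_*,\lambda)$ has sign opposite to $p_\tau-p_*$, hence opposite to $q-p_*$; the product $(q-p_*)[\pi(p_\tau,\lambda)-r(1-\delta)]$ is therefore strictly negative for all small enough $\tau>0$, giving ESV. Conversely, if $\pi$ is not decreasing at $p_*$, I would exhibit a $q$ that breaks the inequality: namely, for $q>p_*$ with $p_\tau$ in a neighborhood where $\pi(p_\tau,\lambda)\ge\pi(p_*,\lambda)$, the product becomes nonnegative, contradicting ESV. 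The statement about $p_*\in(0,1)$ is then a corollary, since by Proposition~\ref{Proposition: Nash equilibria} any interior Nash equilibrium necessarily satisfies $\pi(p_*,\lambda)=r(1-\delta)$.

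\textbf{Main obstacle.} The routine part is the algebraic reduction; the delicate issue is giving a precise meaning to ``$\pi(p,\lambda)$ is decreasing at $p=p_*$'' that makes the ``only if'' direction a genuine equivalence rather than a statement about the derivative sign at one point. I would formalize it as strictly decreasing on a one-sided neighborhood of $p_*$ and, in the converse direction, use the negation (existence of a sequence $p_n\to p_*$ with $\pi(p_n,\lambda)\ge\pi(p_*,\lambda)$ on one side) to construct the counterexample $q$. A secondary, minor obstacle is the edge case where a corner Nash equilibrium satisfies the defining inequality with equality; here one needs monotonicity of $\pi$ at the corner in order for the ESV conclusion to hold, and this is the reason the second clause of the proposition is stated separately from the boundary clause.
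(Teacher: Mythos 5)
Your proposal is correct and follows essentially the same route as the paper's proof: the same reduction of $E(q,p_\tau)-E(p_*,p_\tau)$ to $(q-p_*)\bigl[\pi(p_\tau,\lambda)-r(1-\delta)\bigr]$, the same continuity argument at the corners under strict Nash inequality, and the same sign analysis via monotonicity of $\pi$ when $\pi(p_*,\lambda)=r(1-\delta)$. Your remark about the corner case with equality is apt — the paper's proof likewise only treats the strict-inequality case for $p_*\in\{0,1\}$ and defers the tight case to the second clause.
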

\begin{proof}
This proof follows ideas from~\cite{Martins2017}. Let $p_*=0$ ($p_*=1$) be a Nash equilibrium. From Def.~\ref{def:NashEquilibrium}, we conclude that $\pi(0,\lambda)\le r(1-\delta)$ ($\pi(1,\lambda)\ge r(1-\delta)$, respect.). Assume that a strict inequality is valid. Let $\tau_0$ be small enough such that for all $\tau<\tau_0$, it is valid that $\pi(\tau q,\lambda)< r(1-\delta)$ ($\pi(1-\tau(1-q),\lambda)>r(1-\delta)$, respect.). It is clear that $E(q,\tau q)-E(0,\tau q)=-q(r(1-\delta)-\pi(\tau q,\lambda))<0$ ($E(q,1-\tau(1-q))-E(1,1-\tau(1-q))=(1-q)(r(1-\delta)-\pi(1-\tau(1-q),\lambda))<0$, respect.), for all $q\ne p_*$.

For the second part, note that $\pi(p_*,\lambda)=r(1-\delta)$ implies that 
\begin{align*}
    E(q,(1-\tau)p_*+\tau q)-&E(p_*,(1-\tau)p_*+\tau q)\\&=-(q-p_*)(\pi(p_*,\lambda)-\pi((1-\tau)p_*+\tau q,\lambda))\ ,
\end{align*}
and therefore $p_*$ is an ESV if and only if $\pi$ is decreasing in the first argument at $p=p_*$. The final result follows from Def.~\ref{def:NashEquilibrium}.
\end{proof}

\begin{figure}
\centering
\centering
\begin{subfigure}{.5\textwidth}
  \centering
  \includegraphics[width=\linewidth]{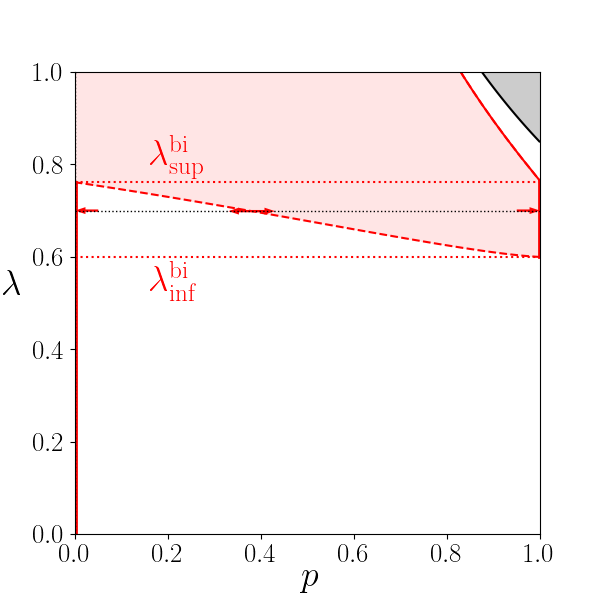}
\end{subfigure}%
\begin{subfigure}{.5\textwidth}
  \centering
  \includegraphics[width=\linewidth]{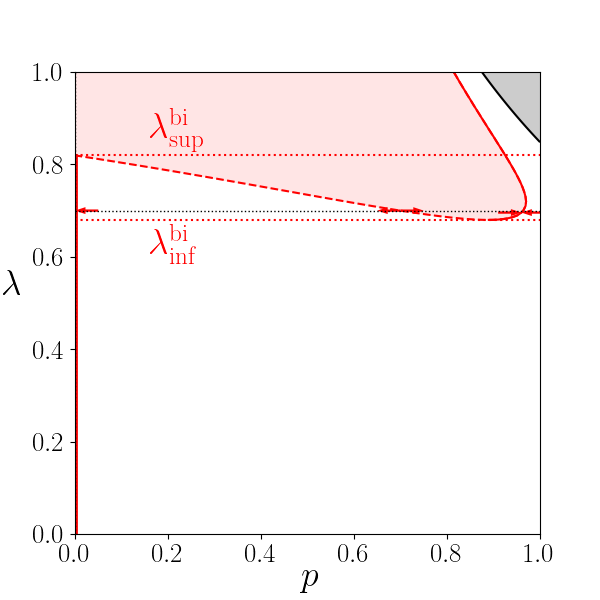}
\end{subfigure}
\caption{ Nash equilibria as a function of vaccine efficacy $\lambda$ and vaccination coverage $p$ for relative vaccination costs  $r=0.25$ (left) and $r=0.30$ (right). We assume a juvenile/adult relative cost $\varepsilon=0.15$ and all vaccination costs supported by the vaccinated individual ($\delta=0$). Red dashed and full lines correspond to unstable and stable Nash equilibria, respectively. The \colorbox{lightred}{light-red region} is the region for which the condition $\pi(p,\lambda)>r(1-\delta)$ is verified, cf. Prop.~\ref{Proposition: Nash ESV}. The horizontal black dotted line exemplifies the dynamics of rational individuals (indicated by the arrows) assuming a vaccine efficacy of $\lambda=0.70$. The \colorbox{lightgray} {grey region} is the disease-free region. 
}\label{Fig3}

\end{figure}

Fig.~\ref{Fig3} illustrates two possible situations described in Prop. \ref{Proposition: Nash ESV}: in the left, the two pure strategies are ESV and there exists an interior Nash equilibrium that is unstable; in the right, for higher relative vaccination costs there are two interior Nash equilibria, one unstable and the other stable, according to the monotonicity of $\pi$ with respect to $p$. The full (dashed) line indicates that $\pi$ is decreasing (increasing, respect.) with respect to $p$, with fixed $\lambda$. 
For both situations described, the region between $\lambda_{\inf}^{\mathrm{bi}}$ and $\lambda_{\sup}^{\mathrm{bi}}$ is the region for model bistability, in which we find three Nash equilibria, two stable (full red line) and one unstable (dashed red line) in between.

In the {light-red region} (that is, where the condition $\pi(p,\lambda)>r(1-\delta)$ is verified), a rational individual will accept to be vaccinated with a probability larger than the population average. In particular, in that region, there is an individual incentive to increase the overall vaccination coverage.

It is not possible to reach the disease-free region through voluntary vaccination if there is no incentive to be vaccinated, in agreement with Prop.~\ref{Prop:Bauch04_generalization}. Moreover, in case Fig.~\ref{Fig3}, left, the region in which there is no individual incentive to increase the vaccination coverage close to the disease-free region is disconnected from the set of vaccination coverage $p=0$.

\section{Numerical Examples}
\label{sec:numerics}

In this section, we present several numerical examples to discuss the present work. Parameters will be, except otherwise said, taken from Table~\ref{table:parameters}. In particular, varicella epidemiology fits our framework, as it is a mild disease for children that can have increased risk for adults and its inclusion in a universal vaccination program is debatable~\cite{Youngetal}. However, the framework developed here may be applied to several other situations, such as Zika, rubella, mumps and measles.

\begin{figure}
\centering
\includegraphics[width=.9\textwidth]{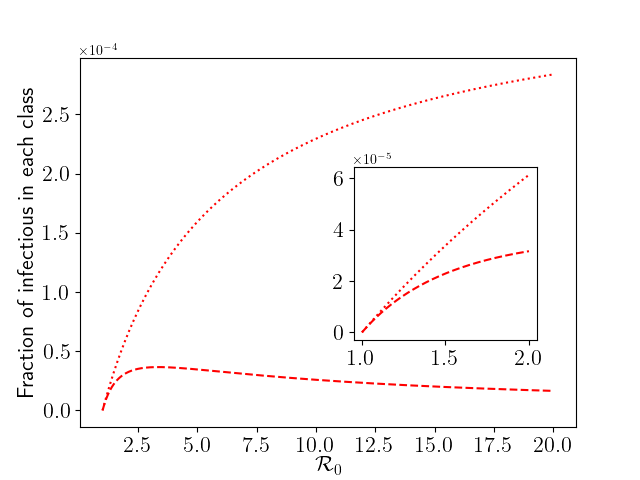}
\caption{Fraction of infectious juveniles (dotted) and adults (dashed), $I^\en_\J/N_\J^*$ and $I^\en_\A/N_\A^*$, respectively, as functions of $\mathcal{R}_0$. For $\mathcal{R}_0$ close to 1 both values are equal to first order, 
indicating that there is no essential difference between juveniles and adults in the disease dynamics (see box). After reaching a maximum around $\mathcal{R}_0\approx 3.35$ the fraction of infectious adults starts to decrease toward zero, while the fraction of infectious juveniles continues to grow, indicating that the disease is, in fact, a childhood disease.
}\label{Fig4}
\end{figure}

Fig.~\ref{Fig4} shows the fraction of infectious juvenile and adults at equilibrium as a function of the basic reproduction number without vaccination, i.e.,~$p=0$. The fraction of infectious juveniles is an increasing function of $\mathcal{R}_0$, but the fraction of infectious adults reaches a maximum around $\mathcal{R}_0\approx 3.35$. Furthermore, for values of $\mathcal{R}_0$ close to 1, the fraction of infectious is equal for both adults and juveniles, an indication that each individual is exposed to the pathogen approximately once during his or her life. This is indicated not only from the fact that both graphs are equal at $\mathcal{R}_0=1$, which is no surprise, as both values are equal to 0, but also from the fact that their slope is the same at this point. We conclude that a highly transmissible disease associated with permanent immunity will be, in equilibrium, a childhood disease. If the effect of this disease is mild in juveniles, there is no severe economic cost associated with the endemic state. This is the main reason we will always compare the economic cost associated with a vaccine program with vaccine efficacy $\lambda$ and vaccine coverage $p$ with the no-vaccination endemic state, cf. definition of $\Phi$ in Subsection~\ref{ssec:social}. For the parameters of Table~\ref{table:parameters}, most of the infectious individuals are below 15 years old, but a reasonable proportion of infectious individuals is above this value.

\begin{figure}
\centering
\centering
\begin{subfigure}{.5\textwidth}
  \centering
   \includegraphics[width=\linewidth]{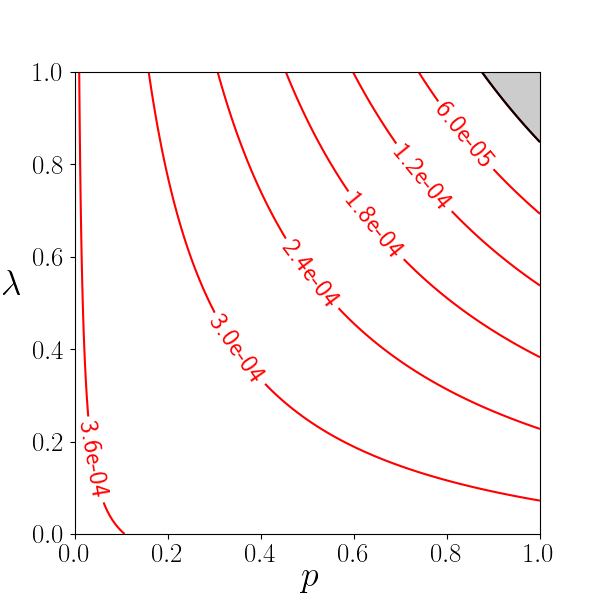}
\end{subfigure}%
\begin{subfigure}{.5\textwidth}
  \centering
  \includegraphics[width=\linewidth]{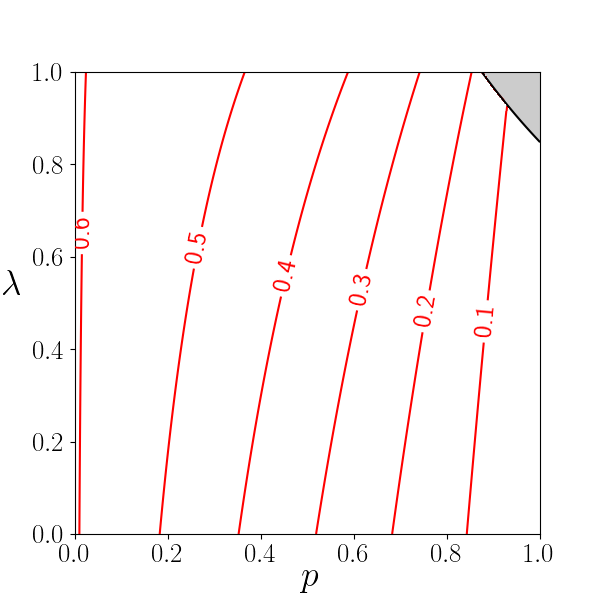}
\end{subfigure}
\caption{Left: Fraction of infectious individuals at endemic equilibrium assuming vaccination coverage $p$ and vaccine efficacy $\lambda$. Right: Fraction of infected juveniles, with respect to the number of infected individuals at endemic equilibrium, $I_\J^{\en}/I^{\en}$, as a function of the vaccine coverage $p$ and vaccine efficacy~$\lambda$. Note that increasing the vaccine coverage implies a smaller number of infected individuals but the disease became more relevant among adults. The \colorbox{gray_lightblue}{grey region} in the upper left corner of both graphs indicates the disease-free region.
}\label{Fig5}
\end{figure}

Assuming $\R_0=8$, the inclusion of a vaccination scheme is illustrated in Fig.~\ref{Fig5}. It clearly shows that for the considered set of parameters, the inclusion of a vaccination program will decrease the overall number of infectious individuals in the endemic equilibrium but it will increase the fraction of infections among adults. Therefore, the introduction of the vaccination scheme should be pondered to avoid negative outcomes for the population.

After the introduction of the vaccination, two natural questions arise: i) \emph{are people willing to be vaccinated?}, and ii) \emph{has the individual behavior a positive or negative effect on society?} The first question is addressed by introducing an individual cost of being vaccinated (that includes the perceived risk of the vaccine, eventual absence to work to go or to take the children to the vaccination site, the financial cost of buying the vaccine, etc) and the cost of non-being vaccinated, i.e., all the costs associated to contracting the disease. If the first is larger, then rational individuals will be vaccinated, if it is smaller, they will not. Interior Nash equilibria of the model correspond to points in which the equality holds. For the second question, we discuss if a given vaccination is in the acceptable social cost region. Ideally, we shall try to find a stable Nash equilibrium within that region, i.e., with $\Phi< 0$. However, this is not always possible.

\begin{figure}
\centering
\begin{subfigure}{.32\textwidth}
  \centering
  \includegraphics[width=\linewidth]{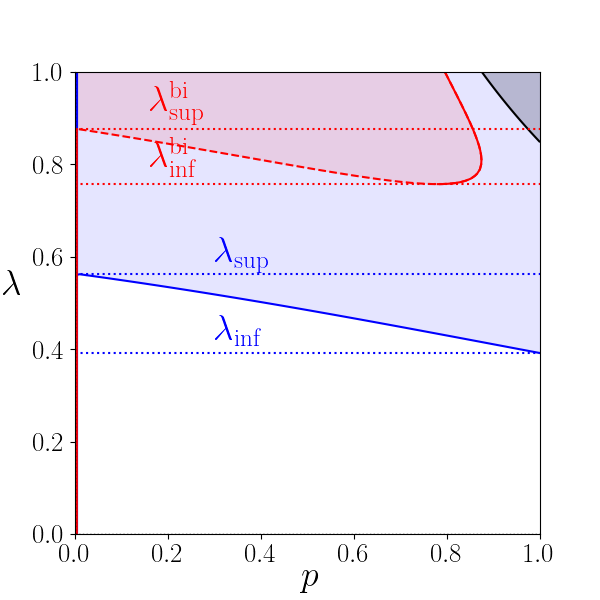}
  \caption{}
  \label{Fig6a}
\end{subfigure}%
\begin{subfigure}{.32\textwidth}
  \centering
  \includegraphics[width=\linewidth]{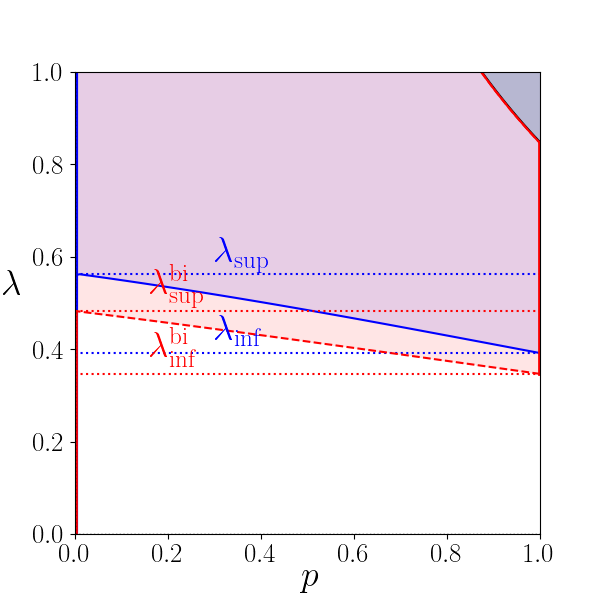}
  \caption{}
  \label{Fig6b}
\end{subfigure}
\begin{subfigure}{.32\textwidth}
  \centering
  \includegraphics[width=\linewidth]{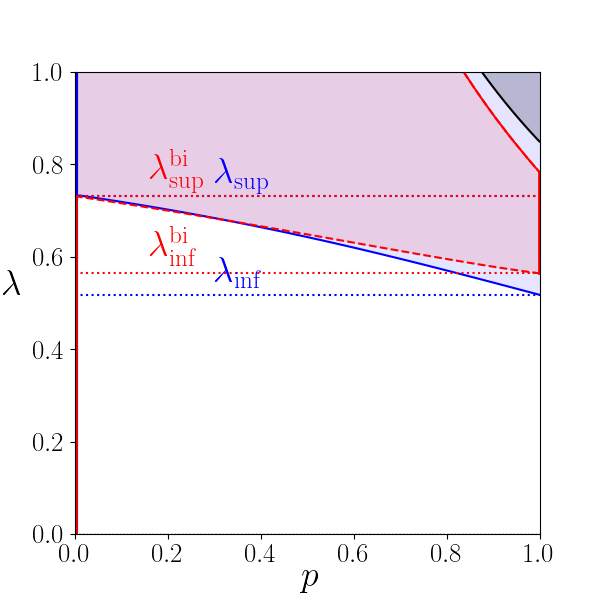}
  \caption{}
  \label{Fig6c}
\end{subfigure}
\caption{
Social \emph{vs.} individual interest with vaccine coverage $p$ and vaccine efficacy $\lambda$, with fixed $\varepsilon=0.15$. 
(a) The vaccination costs assumed by the individual, i.e., $\delta=0$, with a high-cost vaccine $r=0.35$;  (b) $\delta=0$ and low-cost vaccine $r=0.01$; (c) with shared costs between the individual and the society $\delta=0.36$ for the high-cost vaccine $r=0.35$; in the last case all Nash equilibria $p>0$ are in the socially cost-efficient region. The blue continuous line indicates the level set $\Phi=0$, and the red line corresponds to the set of Nash equilibria. Above the blue line, the social cost is negative, otherwise, it is positive.
The \colorbox{lightpurple}{light-purple region} is the intersection between \colorbox{lightblue}{light-blue} and \colorbox{lightred}{light-red regions}; see Figs.~\ref{Fig2} and~\ref{Fig3} for further explanation in the color code. 
}
\label{Fig6}
\end{figure}

Figs.~\ref{Fig6a} and~\ref{Fig6b} illustrate the regions on the parameter space $(p, \lambda)$ in which the individual and the social interests coincide and differ when all the individual vaccination costs are assumed by the beneficiary (i.e., there are no shared costs, as, for example, government subsidies). Fig.~\ref{Fig6c} introduces shared costs for high-cost vaccines, i.e., the society absorbs part of the individual cost.

Close to the disease-free region $\R_p<1$, there is always a barrier where there is no individual interest in being vaccinated, as the infection rates at that region will be residual. In Fig.~\ref{Fig6a}, the region where there is a social interest in increasing the vaccination, but there is an individual rejection of it, is a connected set. In Fig.~\ref{Fig6c} this region is disconnected. 

In Fig.~\ref{Fig6} we indicate the values of $\lambda_{\inf}$ ($\lambda_{\sup}$), the minimum efficacy such that for $p$ large enough (for all $p$, respect.) there is social interest in expanding the vaccine coverage and $\lambda_{\inf,\sup}^{\mathrm{bi}}$ the minimum and maximum values to the existence of bi-stable Nash vaccination equilibrium. Depending on the costs of the vaccine for society and individuals, their interests may not always agree: for a certain range of vaccine efficacy, it may be favorable for society to increase vaccination coverage, but due to the high cost of the vaccine, individuals choose not to be vaccinated, cf. Fig.~\ref{Fig6a} for $\lambda \in (\lambda_{\inf}, \lambda_{\inf}^{\mathrm{bi}})$. For a different set of parameters it may be favorable for individuals to vaccinate, due to the low vaccine cost, but not be beneficial for society, cf. Fig.~\ref{Fig6b} for $\lambda \in (\lambda_{\inf}^{\mathrm{bi}},\lambda_{\inf})$. This situation can be changed by allowing the vaccination costs to be shared. For example, in Fig.~\ref{Fig6c},  $\delta$ was chosen such that $\lambda_{\sup}^{\mathrm{bi}}=\lambda_{\sup}$; i.e., the Nash equilibrium at $p=0$ is socially neutral. In this case, individual vaccination is enhanced for lower vaccine efficacy, as compared to Fig.~\ref{Fig6a}. Moreover, in Fig.~\ref{Fig6c}, all Nash equilibrium $p>0$ are in the acceptable social cost region, avoiding individuals choosing to be vaccinated where their choice would increase social costs.

\begin{figure}
\centering
  \includegraphics[width=.5\textwidth]{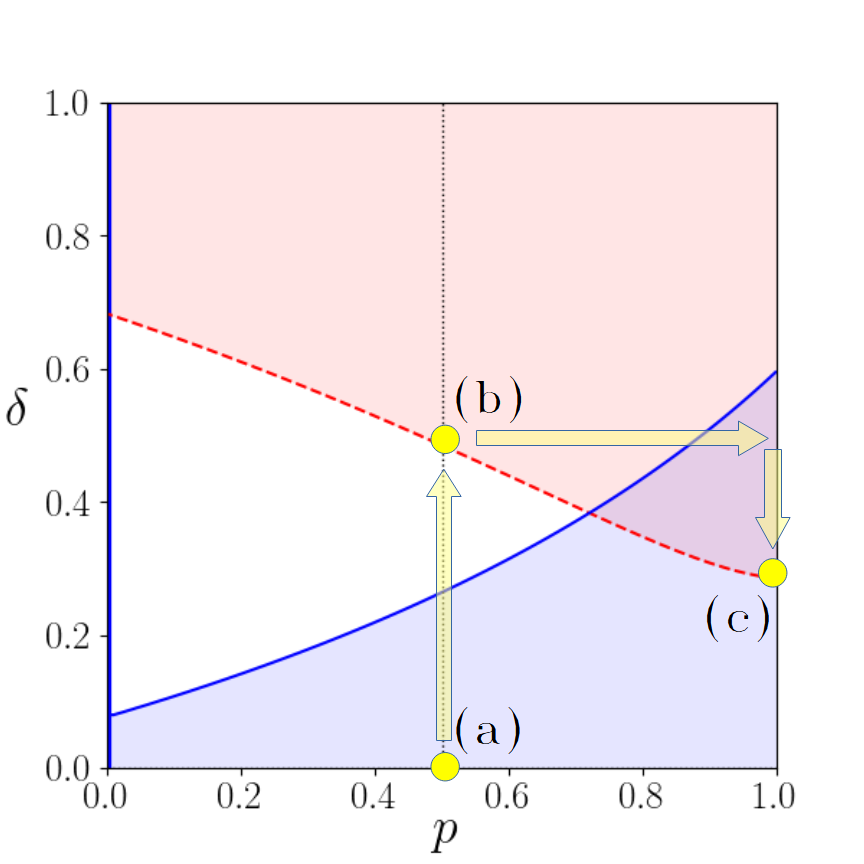}
\caption{
Effect of shared costs, $\delta$ being the fraction of the vaccination cost supported by the society, and $p$ being the vaccination coverage. We assume $\varepsilon=0.15$ and $\lambda=0.6$. In the regions below the blue line, the level of vaccination has a social cost lower than of no vaccination, while in the regions above the red line, a rational individual will choose to be vaccinated with a larger probability than the average individual. The \colorbox{lightpurple}{light-purple} region is the objective of the health authorities, where individuals freely decide to be vaccinated and the overall coverage is cost-efficient, i.e., $\Phi<0$. The yellow arrows stress a particular example of the sharing costs dynamic. The yellow points (a), (b), (c) are highlighted in Fig.~\ref{Fig8a}, ~\ref{Fig8b}, and ~\ref{Fig8c}, respectively. 
}\label{Fig7}
\end{figure}

\begin{figure}
\centering
\begin{subfigure}{.33\textwidth}
  \centering
  \includegraphics[width=\linewidth]{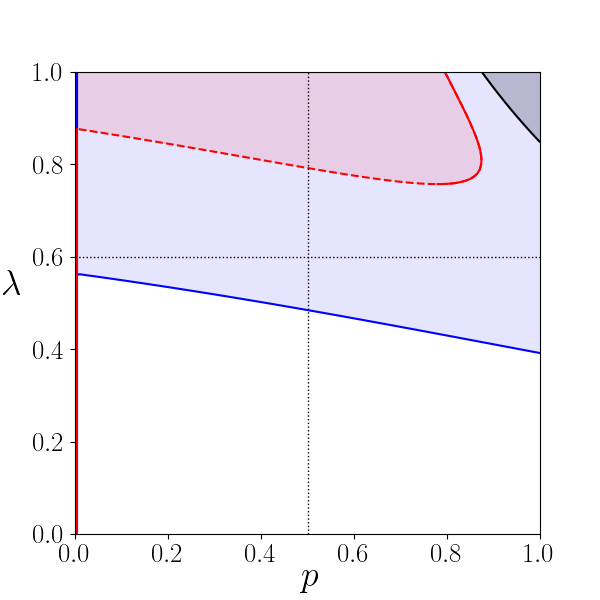}
  \caption{}
  \label{Fig8a}
\end{subfigure}%
\begin{subfigure}{.33\textwidth}
  \centering
  \includegraphics[width=\linewidth]{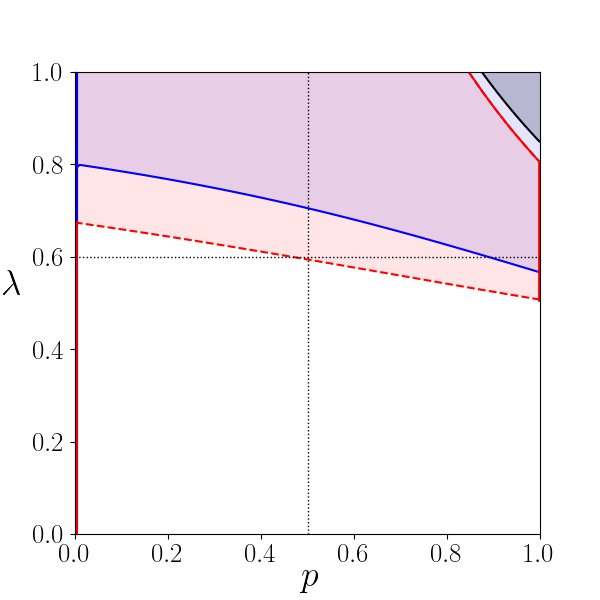}
  \caption{}
  \label{Fig8b}
\end{subfigure}
\begin{subfigure}{.33\textwidth}
  \centering
  \includegraphics[width=\linewidth]{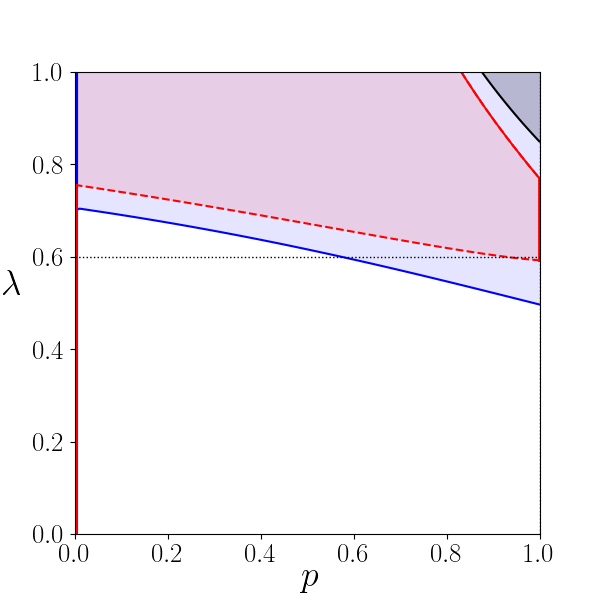}
  \caption{}
  \label{Fig8c}
\end{subfigure}
\caption{Parameter space $(p,\lambda)$ for different values of the fraction of the cost supported by the society:  (a) $\delta=0$, (b)  $\delta=0.5$, (c) $\delta=0.3$. 
The vertical black dotted line represents the level of vaccination $p=0.5$ in (a) and (b) and $p=1$ in~(c). The horizontal black dotted line represents the vaccine efficacy $\lambda=0.6$, corresponding to the examples studied in points (a), (b), and (c) in Fig.~\ref{Fig7}. 
}\label{Fig8}
\end{figure}

The effects of sharing costs are further explored in  Figs.~\ref{Fig7} and~\ref{Fig8}. Fig.~\ref{Fig7} shows a particular example, highlighted by the yellow arrows. Starting with a vaccination coverage of $p=0.5$ (indicated by a dotted vertical line) and $\delta=0$, rational individuals will not vaccinate, but vaccination will benefit society. Suppose vaccination cost starts to be shared between society and beneficiary (in this example, increasing the value of $\delta$ to above approximately 0.5). In that case, it is possible to create incentives such that rational individuals accept to be vaccinated, moving from the light-blue region to the light-red region, i.e. from point (a) to point (b). The natural dynamics will lead the population to a state in which the level of vaccination is high and the population is in a cost-efficient equilibrium, inside the acceptable socially-cost region. After that, it is possible to decrease vaccination incentives (in this case, the shared costs can be relaxed to approximately $\delta=0.3$) without decreasing vaccination coverage, i.e., moving towards point (c), a point in which both interests coincide with the largest share of the cost put in the individual. Fig.~\ref{Fig8a},~\ref{Fig8b}, and~\ref{Fig8c} show the superposition of the individual and social interests for different scenarios for shared costs corresponding to the three points depicted in Fig.~\ref{Fig7}, respectively.

\section{Discussion}
\label{sec:conc}

There is no doubt of the importance of vaccines to prevent and control several human diseases. They are cheap and safe when compared to the diseases they prevent. When a vaccination campaign is designed, it is important to understand not only the individual implications but the population-wide effects as well. The very concept of herd immunity is a population-wide effect of individual decisions. 

 In this work, we compared social \emph{vs}. individual interests regarding vaccination and disease costs and investigated if it is possible to promote voluntary vaccination and still satisfy both interests. For that, we considered an age-structured model with age-dependent costs, disease-induced permanent immunity, and imperfect vaccination and used a game theory approach to analyze individual decisions.

One of the interesting ideas that the use of game theory brought to the study of vaccinations (cf.~\cite{Bauch04}) is that it is not possible to reach the herd immunity level through voluntary vaccinations in a population of rational individuals. However, imperfect vaccination unfolds into qualitatively different situations, depicted in Fig.~\ref{Fig6}, that deserve to be discussed in more detail. Close to the herd immunity region (grey), there is always a region in which there is a social interest in increasing the vaccination coverage (light-blue region), but in which the risk perceived from the vaccine is larger than the risk from the disease. This is the region in which social and individual interests are at odds.  It can be connected or not to the region with no vaccination ($p=0$). This means that for reasons not modeled in the present work, but possibly associated with a slow variation in $\lambda$, for certain parameters it is possible to continuously move from a near-optimal situation to a region in which no individual is vaccinated. For lower vaccination costs supported by individuals (Fig.~\ref{Fig6b} and \ref{Fig6c}) no such thing happens.
The situation described in Fig.~\ref{Fig6c} is the desirable one, considering that disease elimination (grey region) is out of reach, and a light-purple region insulates a small light-blue region from the regions with low vaccine coverage. In the case of mumps and measles, most developed countries were close to disease elimination. However,  human behavior (particularly, due to the influence of vaccine deniers) made vaccine coverage decrease, which, possibly enhanced by immunity waning, moved these countries away from complete disease elimination. 

In the previous paragraph, we discussed the possible variation in time of the parameter $\lambda$. Several reasons may cause this variation, not only vaccine immunity waning. For example, for a disease such that it is recommended to have two doses to complete the immunization process, the parameter $\lambda$ may represent individuals with incomplete vaccination schemes, and, therefore, with partial and short-lived immunization~\cite{hong2021waning,shapiro2022effectiveness}.

The present work did not include any human dynamics. In particular, it does not consider changing in time of the vaccine coverage $p$. A possible simple dynamic is: $p$ increase if and when the new cases increase and decrease otherwise. A second possibility to an evolutionary law for $p$ is to consider the assumed total future risk of the disease --- with or without a discount due to future incertitude --- in the comparison between disease and vaccine risk made by rational individuals. See, e.g., ~\cite{Manfredi2013,Martins2017,Laguzet,Villota_2024} and references therein for modeling human behavior in epidemic models, coupling game-theoretical models and dynamical systems.

The model presented here is, as it is usual, an oversimplification of the
real-world effects. In particular, we consider that the disease dynamics is not age-structured, but its effect, more precisely, its cost, is age-dependent. As a consequence, we also simplified the effects of waning immunity, making it synchronous with the juvenile-adult transition (i.e., the cost-increase transition).  We do not consider directly derived infections (like shingles), or the elderly, immune-suppressed
individuals, babies, and several other categories that are socially relevant for disease dynamics, in particular, when social costs are considered. Of course, no specific disease fits precisely in this model, but the proposed framework can be tailored to accommodate the features of any particular disease dynamic.

\section*{Acknowledgements}
This work is funded by national funds through the FCT – Fundação para a Ciên\-cia e a Tecnologia, I.P.,\, under the scope of the projects UIDB/00297/2020  (https://doi.org/10.54499/UIDB/00297/2020) and UIDP/00297/2020\\ (https://doi.org/10.54499/UIDP/00297/2020) (Center for Mathematics and Applications\! --\! NOVA Math)\, and\, 2022.03091.PTDC\,\\ (https://doi.org/10.54499/2022.03091.PTDC), \emph{Mathematical Modelling of\\ Multiscale Con\-trol Systems: applications to human diseases} (CoSysM3). We acknowledge the insightful comments by Max Souza, Universidade Fluminense, Brazil and NOVA Math, Portugal and the three referees that helped to improve our manuscript. All the authors contributed equally to the development of the work's ideas, computational codes, data analysis, discussions, and writing of the final version of the manuscript.

\bibliographystyle{abbrv}
\bibliography{biblio}

\end{document}